\newtheorem{thm}{Theorem}
\newtheorem{lem}{Lemma}
\newtheorem{definition}{Definition}
\newtheorem{algorithm}{Algorithm}
\newcommand{\A}{{\bf A}}
\def\x{{\bf x}}
\def\y{{\bf y}}
\def\w{{\bf w}}
\def\z{{\bf z}}
\def\e{{\bf e}}
\newcommand{\beq}{\begin{equation}}
\newcommand{\eeq}{\end{equation}}
\newcommand{\bea}{\begin{eqnarray}}
\newcommand{\eea}{\end{eqnarray}}
\newcommand{\Prob}{\ensuremath{\mathbb{P}}}
\long\def\symbolfootnote[#1]#2{\begingroup%
\def\thefootnote{\fnsymbol{footnote}}\footnote[#1]{#2}\endgroup}
\begin{document}
%
\title{Improved Sparse Recovery Thresholds with Two-Step Reweighted $\ell_1$ Minimization}


\author{
\authorblockN{M.~Amin Khajehnejad}
\authorblockA{Caltech\\
Pasadena CA 91125, USA \\
Email: amin@caltech.edu} \and
\authorblockN{Weiyu Xu}
\authorblockA{Cornell University \\
 Ithaca NY 14853, USA \\
Email: weiyu@caltech.edu} \and
\authorblockN{A.~Salman Avestimehr}
\authorblockA{Cornell University \\
Ithaca NY 14853, USA \\
Email: avestimehr@ece.cornell.edu} \and
\authorblockN{Babak Hassibi}
\authorblockA{Caltech\\
Pasadena CA 91125, USA \\
Email: hassibi@caltech.edu}}


%


\maketitle

\begin{abstract}
\boldmath
It is well known that $\ell_1$ minimization can be used to recover sufficiently sparse unknown signals from compressed linear measurements. In fact, exact thresholds on the sparsity, as a function of the ratio between
the system dimensions,
so that with high probability almost all sparse signals can be
recovered from iid Gaussian measurements, have been computed and are
referred to as ``weak thresholds'' \cite{D}. In this paper, we
introduce a reweighted $\ell_1$ recovery algorithm composed of two
steps: a standard $\ell_1$ minimization step to identify a set of
entries where the
signal is likely to reside, and a weighted $\ell_1$ minimization step where entries outside this set are penalized. For signals
where the non-sparse component has iid Gaussian entries, we prove a
``strict'' improvement in the weak recovery threshold. Simulations
suggest that the improvement can be quite impressive---over 20\%  in
the example we consider.
\end{abstract}


%
\IEEEpeerreviewmaketitle

\section{Introduction}
\label{sec:Intro}
Compressed sensing addresses the problem of recovering sparse signals
from under-determined systems of linear equations  \cite{Rice}. In
particular, if $\x$ is a $n\times1$ real vector that is known to have at
most $k$ nonzero elements where $k<n$, and $\A$ is a $m\times n $
measurement matrix with $k<m<n$, then for appropriate values of $k$,
$m$ and $n$, it is possible to efficiently recover $\x$ from $\y=\A\x$ \cite{D
  CS,DT,CT,Baraniuk Manifolds}. The most well recognized such
algorithm is $\ell_1$ minimization which can be formulated  as
follows:
\beq
\label{eq:l1 min}
\min_{\A\z=\A\x}\|\z\|_1
\eeq
The first result that established the fundamental limits of signal
recovery using $\ell_1$ minimization is due to Donoho and Tanner
\cite{D,DT}, where it is shown
that if the measurement matrix is iid Gaussian, for a given ratio
of $\delta = \frac{m}{n}$, $\ell_1$ minimization can successfully
recover {\em every} $k$-sparse signal, provided that $\mu = \frac{k}{n}$ is
smaller that a certain threshold. This statement is true
asymptotically as $n\rightarrow \infty$ and with high
probability. This threshold guarantees the recovery of {\em all}
sufficiently sparse signals and is therefore referred to as a strong
threshold. It therefore does not depend on the actual distribution of
the nonzero entries of the sparse signal and as such is a universal
result. At this point it is not known whether there exists other
polynomial-time algorithms with superior strong threshold.

Another notion introduced and computed in \cite{D,DT} is that of a
{\em weak} threshold where signal recovery is guaranteed for {\em
  almost all} support sets and {\em almost all} sign patterns of the
sparse signal, with high probability as $n\rightarrow\infty$. The weak
threshold is the one that can be observed in simulations of $\ell_1$
minimization and allows for signal recovery beyond the strong threshold. It
is also universal in the sense that it applies to all symmetric
distributions that one may draw the nonzero signal entries
from. Finally, it is not known whether there exists other
polynomial-time algorithms with superior weak thresholds.

In this paper we prove that a certain \emph{iterative
  reweighted $\ell_1$} algorithm indeed has better weak recovery
guarantees for particular classes of sparse signals, including sparse
Gaussian signals.  We had previously introduced these algorithms in
\cite{Allerton reweighted l_1}, and had proven that for a very
restricted class of \emph{polynomially decaying} sparse signals they
outperform standard $\ell_1$ minimization. In this paper however, we
  extend this result to a much wider and more reasonable class of
sparse signals. The key to our result is the fact that for these
  classes of signals, $\ell_1$ minimization has an \emph{approximate support
  recovery} property which can be exploited via a reweighted
  $\ell_1$ algorithm, to obtain a provably superior weak threshold.
In particular, we consider Gaussian sparse
signals, namely sparse signals where the nonzero entries are iid
  Gaussian. Our analysis of Gaussian sparse signals relies on
  concentration bounds on the partial sum of their order
  statistics. Though not done here, it can be shown that
for symmetric distributions with sufficiently fast decaying tails and
nonzero value at the origin, similar bounds and improvements on the
  weak threshold can be achieved.

It is worth noting that different variations of reweighted $\ell_1$
algorithms have been recently introduced in the literature and, have
shown experimental improvement over ordinary $\ell_1$ minimization
\cite{CWB07,Needell}.  In \cite{Needell} approximately sparse signals
have been considered, where perfect recovery is never
possible. However, it has been shown that the recovery noise can be
reduced using an iterative scheme. In \cite{CWB07}, a similar
algorithm is suggested and is empirically shown to outperform $\ell_1$
minimization for exactly sparse signals with non-flat
distributions. Unfortunately, \cite{CWB07} provides no theoretical
analysis or performance guarantee. The particular reweighted $\ell_1$
minimization algorithm that we propose and analyze is of signiciantly
less computational complexity than the earlier ones (it only solves
two linear programs). Furthermore, experimental results confirm that
it exhibits much better performance than previous reweighted methods.
Finally, while we do rigorously establish an
{\em improvement} in the weak threshold, we currently do not have
tight bounds on the new weak threshold and simulation results are far
better than the bounds we can provide at this time.

\hfill 
\section{Basic Definitions}
 A sparse signal with exactly $k$ nonzero entries is called
 $k$-sparse. For a vector $\x$, $\|\x\|_1$ denotes the $\ell_1$
 norm. The support (set) of $\x$,  denoted by $supp(\x)$, is the index
 set of its nonzero coordinates. For a vector $\x$ that is not exactly
 $k$-sparse, we define the $k$-support of $\x$ to be the index set of
 the largest $k$ entries of $\x$ in amplitude, and denote it by
 $supp_k(\x)$. For a subset $K$ of the entries of $\x$, $\x_K$ means
 the vector formed by those entries of $\x$ indexed in $K$. Finally,
 $\max|\x|$ and $\min|\x|$ mean the absolute value of the maximum and
 minimum entry of $\x$ in magnitude, respectively.

\section{Signal Model and Problem Description}
\label{sec:model}
We consider sparse random signals with iid Gaussian nonzero
entries. In other words we assume that the unknown sparse signal is a
$n\times 1$ vector $\x$ with exactly $k$ nonzero entries, where each
nonzero entry is independently derived from the standard normal distribution
$\mathcal{N}(0,1)$. The measurement matrix $\A$ is a $m\times n$
matrix with iid Gaussian entries with a ratio of dimensions $\delta
= \frac{m}{n}$. Compressed sensing theory guarantees that if
$\mu=\frac{k}{n}$ is smaller than a certain threshold, then every
$k$-sparse signal can be recovered using $\ell_1$ minimization. The
relationship between $\delta$ and the maximum threshold of $\mu$ for
which such a guarantee exists is called the \emph{strong sparsity
  threshold}, and is denoted by $\mu_{S}(\delta)$.  A more practical
performance guarantee is the so-called \emph{weak sparsity threshold},
denoted by $\mu_{W}(\delta)$, and has the following
interpretation. For a fixed value of $\delta = \frac{m}{n}$ and
iid Gaussian matrix $\A$ of size $m\times n$,  a random $k$-sparse
vector $\x$ of size $n\times 1$ with a randomly chosen support set and
a random sign pattern can be recovered from $\A\x$ using $\ell_1$
minimization with high probability, if
$\frac{k}{n}<\mu_{W}(\delta)$. Similar recovery thresholds can be
obtained by imposing more or less restrictions. For example, strong
and weak thresholds for nonnegative signals have been evaluated in
\cite{Donoho positive}.

We assume that the support size of $\x$, namely $k$, is slightly
larger than the weak threshold of $\ell_1$ minimization. In other
words,  $k = (1+\epsilon_0)\mu_{W}(\delta)$ for some
$\epsilon_0>0$. This means that if we use $\ell_1$ minimization, a
randomly chosen $\mu_{W}(\delta)n$-sparse signal will be recovered
perfectly with very high probability, whereas a randomly selected
$k$-sparse signal will not. We would like to show that for a strictly
positive $\epsilon_0$, the iterative reweighted $\ell_1$ algorithm of
Section \ref{sec:Algorithm} can indeed recover a randomly selected
$k$-sparse signal with high probability, which means that it has an
improved weak threshold.

\section{Iterative weighted $\ell_1$ Algorithm}
\label{sec:Algorithm}
We propose the following algorithm, consisting of two $\ell_1$ minimization steps: a standard one and a weighted one. The input to
the algorithm is the vector $\y=\A\x$, where $\x$ is a $k$-sparse
signal with $k=(1+\epsilon_0)\mu_W(\delta)n$, and the output is an
approximation $\x^*$ to the unknown vector $\x$. We assume that $k$,
or an upper bound on it, is known. Also $\omega>1$ is a
predetermined weight.

\begin{figure}[t]
\centering
  \includegraphics[width= 0.27\textwidth]{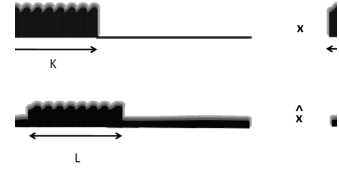}
  \caption{ \scriptsize{A pictorial example of a sparse signal  and its $\ell_1$ minimization approximation}}
  \label{fig:sigmal}
\end{figure}
\begin{algorithm}
\text{}
\begin{enumerate}
\item Solve the $\ell_1$ minimization problem:
\begin{equation}
\hat{\x} = \arg{ \min{ \|\z\|_1}}~~\text{subject to}~~ \A\z
= \A\x.
\end{equation}
\item Obtain an approximation for the support set of $\x$:
find the index set $L \subset \{1,2, ..., n\}$ which corresponds to
the largest $k$ elements of
$\hat{\x}$ in magnitude.
\item Solve the following weighted $\ell_1$ minimization problem and declare the solution as output:
 \beq
\x^* = \arg{\min\|\z_L\|_1+\omega\|\z_{\overline{L}}\|_1}~~\text{subject to}~~ \A\z
= \A\x.
\label{eq:weighted l_1}
\eeq
\end{enumerate}
\label{alg:modmain}
\end{algorithm}

The intuition behind the algorithm should be clear. In the first step
we perform a standard $\ell_1$ minimization. If the sparsity of the
signal is beyond the weak threshold $\mu_W(\delta)n$, then $\ell_1$ minimization is not capable of recovering the signal. However, we use
the output of the $\ell_1$ minimization to identify an index set, $L$,
which we ``hope'' contains most of the nonzero entries of $\x$. We
finally perform a weighted $\ell_1$ minimization by penalizing those
entries of $\x$ that are not in $L$ (ostensibly because they have a
lower chance of being nonzero).

In the next sections we formally prove that the above intuition is
correct and that, for certain classes of signals, Algorithm
\ref{alg:modmain} has a recovery threshold beyond that of standard
$\ell_1$ minimization. The idea of the proof is as follows. In section
\ref{sec:robustness}, we prove that there is a large overlap between
the index set $L$, found in step 2 of the algorithm, and the support set
of the unknown signal $\x$ (denoted by $K$)---see Theorem \ref{thm:l_1
  support recovery} and Figure \ref{fig:sigmal}. Then in section
\ref{sec:perfect recovery}, we show that the large overlap between $K$
and $L$ can result in perfect recovery of $\x$, beyond the standard
weak threshold, when a weighted $\ell_1$ minimization is used in step
3.

\section{Approximate Support Recovery, Steps 1 and 2 of the Algorithm}
\label{sec:robustness}

In this Section, we carefully study the first two steps of Algorithm
\ref{alg:modmain}. The unknown signal $\x$ is assumed to be a Gaussian
$k$-sparse vector with support set $K$, where
$k=|K|=(1+\epsilon_0)\mu_{W}(\delta)n$, for some $\epsilon_0>0$. By A
Gaussian $k$-sparse vector, we mean one where the nonzero entries are
iid Gaussian (zero mean and unit variance, say). The
solution $\hat{\x}$ to the
$\ell_1$ minimization obtained in step 1 of Algorithm
\ref{alg:modmain} is in all likelihood a full vector.  The set $L$, as
defined in the algorithm, is in fact the
$k$-support set of $\hat{\x}$. We show that for small enough
$\epsilon_0$, the intersection of  $L$ and $K$ is with high
probability very large, so that $L$ can be counted as a good
approximation to $K$ (Figure \ref{fig:sigmal}).

In order to find a decent lower bound on $|L\cap K|$, we mention three
separate facts and establish a connection between them. First, we
prove a general lemma that bounds $|K\cap L|$ as a function of
$\|\x-\hat{\x}\|_1$. Then, we mention an intrinsic property of
$\ell_1$ minimization called \emph{weak robustness} that provides an
upper bound on the quantity $\|\x-\hat{\x}\|_1$. Finally, we
specifically use the Gaussianity of $\x$ to obtain Theorem
\ref{thm:l_1 support recovery}.  Let us start with a definition.
\begin{definition}
For a $k$-sparse signal $\x$, we define $W(\x,\lambda)$ to be the size of the largest subset of nonzero entries of $\x$ that has a $\ell_1$ norm less than or equal to $\lambda$.
\beq
\nonumber W(\x,\lambda):= \max\{|S|~|~S\subseteq supp(\x),~\|\x_S\|_1\leq \lambda\}
\eeq
\end{definition}
\noindent Note that $W(\x,\lambda)$ is increasing in $\lambda$.
\begin{lem}
Let $\x$ be a $k$-sparse vector and $\hat{\x}$ be another vector. Also, let $K$ be the support set of $\x$ and $L$ be the $k$-support set of $\hat{\x}$. Then
\beq
|K\cap L|\geq k-W(\x,\|\x-\hat{\x}\|_1)
\eeq
\label{lem:deviation thm}
\end{lem}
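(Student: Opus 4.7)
The plan is to control $M := K \setminus L$, the set of true-support entries that are missed by the estimated top-$k$ set. Since $|K|=|L|=k$, we have $|M| = |L \setminus K| =: t$, and the lemma amounts to showing $\|\x_M\|_1 \leq \|\x-\hat{\x}\|_1$: once this is established, the definition of $W$ together with $M \subseteq \operatorname{supp}(\x)$ will immediately give $t = |M| \le W(\x,\|\x-\hat{\x}\|_1)$, which is the desired bound.

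To prove the $\ell_1$ inequality on $\x_M$, I would set up a pairing between $M$ and $L \setminus K$. Since $|M|=|L\setminus K|=t$, fix any bijection $\pi : M \to L \setminus K$. The critical observation is that for every $i \in M$ and $j = \pi(i) \in L \setminus K$, the index $j$ lies in the top-$k$ magnitudes of $\hat{\x}$ while $i$ does not, so
\[
|\hat{\x}_j| \geq |\hat{\x}_i|.
\]
Now note that $\x_j = 0$ because $j \notin K = \operatorname{supp}(\x)$, so $|\x_j - \hat{\x}_j| = |\hat{\x}_j|$. Combining this with the triangle inequality $|\x_i - \hat{\x}_i| + |\hat{\x}_i| \geq |\x_i|$ yields, for each pair,
\[
|\x_i - \hat{\x}_i| + |\x_j - \hat{\x}_j| \;\geq\; |\x_i - \hat{\x}_i| + |\hat{\x}_i| \;\geq\; |\x_i|.
\]

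Summing over the $t$ disjoint pairs gives
\[
\sum_{i\in M} |\x_i - \hat{\x}_i| + \sum_{j\in L\setminus K} |\x_j - \hat{\x}_j| \;\geq\; \sum_{i\in M} |\x_i| \;=\; \|\x_M\|_1.
\]
Since $M$ and $L\setminus K$ are disjoint index sets, the left-hand side is bounded above by $\|\x-\hat{\x}\|_1$. Therefore $\|\x_M\|_1 \leq \|\x-\hat{\x}\|_1$. By definition, $W(\x,\|\x-\hat{\x}\|_1)$ is the largest cardinality of a subset of $\operatorname{supp}(\x)$ whose restricted $\ell_1$-norm is at most $\|\x-\hat{\x}\|_1$, and $M$ is one such subset; hence $|M| \leq W(\x,\|\x-\hat{\x}\|_1)$, which rearranges to the claim.

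The only genuinely subtle step is the pairing argument: one must notice that the combinatorial fact $|M| = |L\setminus K|$ allows a bijection, and that this pairing is precisely what lets the ``wasted mass'' $|\hat{\x}_j|$ on $L\setminus K$ absorb the missing true mass $|\hat{\x}_i|$ on $M$ via the triangle inequality. The rest is bookkeeping and invoking the definition of $W$ together with its monotonicity in $\lambda$.
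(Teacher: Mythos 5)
Your proof is correct. It establishes the same key intermediate inequality as the paper, namely $\|\x_{K\setminus L}\|_1 \leq \|\x-\hat{\x}\|_1$, after which the conclusion follows from the definition of $W$ exactly as you say; but your route to that inequality is genuinely different and more elementary. The paper introduces an auxiliary optimization: it minimizes $\|\e\|_1$ over all $\e$ satisfying the ordering constraints $\max|(\x+\e)_{K\setminus L}| \leq \min|(\x+\e)_{K\cap L}|$ and $\max|(\x+\e)_{K\setminus L}| \leq \min|(\x+\e)_{L\setminus K}|$, notes that $\hat{\x}-\x$ is feasible, and then lower-bounds the optimal value $\|\e^*\|_1$ by $\|\x_{K\setminus L}\|_1$ via an aggregate argument built around the single threshold $a = \max|(\x+\e^*)_{K\setminus L}|$ (summing $|x_i|-a$ over the large entries of $K\setminus L$ and recovering $a\,|L\setminus K|$ from the other set). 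You instead work directly with $\hat{\x}$: the bijection $\pi:K\setminus L \to L\setminus K$ (which exists because $|K|=|L|$ forces $|K\setminus L|=|L\setminus K|$) lets you apply the top-$k$ property $|\hat{\x}_{\pi(i)}|\geq|\hat{\x}_i|$ and the triangle inequality pair by pair, with $\x_{\pi(i)}=0$ converting the ``wasted mass'' on $L\setminus K$ into exactly the term needed. Both proofs exploit the same two structural facts (the cardinality identity and the defining order property of $L$), but yours avoids the detour through an extremal $\e^*$ and is easier to verify; the paper's formulation in terms of an arbitrary feasible $\e$ is slightly more general, though that generality is never used. No gaps.
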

\begin{proof}
Let $x_i$ be the $i$th entry of $\x$ and  $\e^*$ be the solution to the following minimization program
\begin{align}
&\min\|\e\|_1\nonumber \\
\text{s.t}. &\left\{\begin{array}{c}\max|(\x+\e)_{K\setminus L}| \leq \min|(\x+\e)_{K\cap L}| \\ \max|(\x+\e)_{K\setminus L}| \leq \min|(\x+\e)_{L\setminus K}|\end{array}\right.
\label{eq:min aux}
\end{align}
\noindent Now since, $ \hat{\x} = \x+(\hat{\x}-\x)$ satisfies the
constraints of the minimization  (\ref{eq:min aux}), we can write
\beq
\|\e^*\|_1 \leq \|\hat{\x}-\x\|_1.
\label{eq:aux1}
\eeq
\noindent Let $a = \max|(\x+\e^*)_{K\setminus L}|$.  Then for each
$i\in K\setminus L$, using the triangular inequality we have
\beq
|x_i|-|e_i| \leq |x_i + e_i| \leq a
\label{eq:aux-+}
\eeq
Therefore, by summing up the inequalities in (\ref{eq:aux-+}) for $i\in K\setminus L$ we have
\beq
\|\e^*_{K\setminus L}\|_1 \geq \sum_{i\in K\setminus L, |x_i| >a}|x_i|-a
\eeq
Similarly,
\beq
\|\e^*_{L\setminus K}\|_1 \geq a|L\setminus K|
\eeq
But $|L\setminus K|=|K\setminus L|$ and therefore we have
\bea
\|\e^*\|_1 &\geq&  \sum_{i\in K\setminus L, |x_i| >a}(|x_i|-a) + a|K\setminus L|  \nonumber \\
&\geq& \sum_{i\in K\setminus L}|x_i|
\label{eq:aux2}
\eea
\noindent (\ref{eq:aux1}) and (\ref{eq:aux2}) together imply that $\|\x-\hat{\x}\|_1\geq \|\x_{K\setminus L}\|_1$, which by definition means that $W(\x,\|\x-\hat{\x}\|_1)\geq |K\setminus L|$.
\end{proof}
We now introduce the notion of weak robustness, which
allows us to bound $\|\x-\hat{\x}\|_1$, and has the following formal
definition \cite{isitrobust}.
\begin{definition}
Let the set $S\subset\{1,2,\cdots,n\}$ and the subvector $\x_S$  be
fixed. A solution $\hat{\x}$ is called weakly robust if, for some $C >
1$ called the robustness factor, and all $\x_{\overline{S}}$, it
holds that
\beq
\|(\x-\hat{\x})_S\|_1 \leq \frac{2C}{C-1}\|\x_{\overline{S}}\|_1
\label{def1}
\eeq
\noindent and
\beq
\|\x_S\|-\|\hat{\x}_S\|\leq \frac{2}{C-1}\|\x_{\overline{S}}\|_1
\label{def2}
\eeq
\end{definition}
\noindent The weak robustness notion allows us to bound the error
in $\|\x -\hat{\x}\|_1$ in the following way. If the matrix $\A_S$ ,
obtained by retaining only those columns of $\A$ that are
indexed by $S$, has full column rank, then the quantity
\beq
\nonumber
~\kappa = \max_{\A\w =0, \w\neq 0} \frac{\|\w_S\|_1}{\|\w_{\overline{S}}\|_1}
\eeq
must be finite, and one can write
\beq
\|\x-\hat{\x}\|_1 \leq \frac{2C(1+\kappa)}{C-1}\|\x_{\overline{S}}\|_1
\label{eq:robustness}
\eeq
\noindent In \cite{isitrobust}, it has been shown that for Gaussian
iid measurement matrices $\A$, $\ell_1$ minimization is weakly robust,
i.e., there exists a robustness factor
$C>1$ as a function of $\frac{|S|}{n} < \mu_{W}(\delta)$ for which
(\ref{def1}) and (\ref{def2})
hold. Now let $k_1 = (1-\epsilon_1)\mu_{W}(\delta)n$ for some small
$\epsilon_1>0$, and $K_1$ be the $k_1$-support set of $\x$, namely,
the set of the largest $k_1$ entries of $\x$ in magnitude. Based on
equation (\ref{eq:robustness}) we may write
\beq
\|\x-\hat{\x}\|_1 \leq \frac{2C(1+\kappa)}{C-1}\|\x_{\overline{K_1}}\|_1
\label{eq:robustness1}
\eeq
\noindent For a fixed value of $\delta$, $C$ in (\ref{eq:robustness1})
is a function of $\epsilon_1$ and becomes arbitrarily close to $1$ as
$\epsilon_1\rightarrow 0$. $\kappa$ is also a bounded function of
$\epsilon_1$ and therefore we may replace it with an upper bound
$\kappa^*$. We now have a bound on $\|\x-\hat{\x}\|_1$. To explore
this inequality and understand its asymptotic behavior, we
apply a third result, which is a certain concentration bound on
the order statistics of Gaussian random variables.
\begin{lem}
Suppose $X_1,X_2,\cdots,X_n$ are $N$ iid $\mathcal{N}(0,1)$ random
variables. Let $S_N = \sum_{i=1}^{N}|X_i|$ and let $S_M$ be the sum of
the largest $M$ numbers among the $|X_i|$, for each
$1\leq M < N$. Then for every $\epsilon>0$, as $N\rightarrow\infty$,
we have
\bea
\Prob(|\frac{S_N}{N} - \sqrt{\frac{2}{\pi}}| > \epsilon)\rightarrow 0, \\
\Prob(|\frac{S_M}{S_N} -
\exp(-\frac{\Psi^2(\frac{M}{2N})}{2})|>\epsilon )\rightarrow 0
\eea
\noindent where $\Psi(x) = Q^{-1}(x)$ with $Q(x) = \frac{1}{\sqrt{2\pi}}\int_{x}^{\infty}e^{-\frac{y^2}{2}}dy$.
\label{lemma:Gaussian_Base}
\end{lem}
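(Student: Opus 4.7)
The first claim is an immediate consequence of the weak law of large numbers applied to the iid random variables $|X_1|,\ldots,|X_N|$, since $\mathbb{E}|X_1| = \int_0^\infty y\cdot\frac{2}{\sqrt{2\pi}}e^{-y^2/2}\,dy = \sqrt{2/\pi}$. So $S_N/N \to \sqrt{2/\pi}$ in probability.

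The second claim reduces, by Slutsky's theorem, to proving that $S_M/N$ concentrates around $\sqrt{2/\pi}\,\exp(-\Psi^2(M/(2N))/2)$. My plan is to identify the deterministic threshold above which the empirical order statistic $Y_{(M)}$ (the $M$-th largest of $|X_1|,\ldots,|X_N|$) should lie, and then replace the random sum $\sum_{|X_i|\ge Y_{(M)}}|X_i|$ by the sum over a deterministic threshold, invoking the law of large numbers for the latter. Concretely, let $t_N := \Psi(M/(2N))$, so that $\Pr(|X_1|>t_N) = 2Q(t_N) = M/N$. Then a direct computation gives
\begin{equation}
\mathbb{E}\bigl[|X_1|\,\mathbf{1}_{|X_1|>t_N}\bigr] \;=\; \frac{2}{\sqrt{2\pi}}\int_{t_N}^{\infty} y\,e^{-y^2/2}\,dy \;=\; \sqrt{\tfrac{2}{\pi}}\,e^{-t_N^2/2}.
\end{equation}

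Step (a): Show $Y_{(M)} \to t_N$ (in probability, up to $o(1)$). The empirical CDF $F_N$ of $|X_1|,\ldots,|X_N|$ converges uniformly to $F(t)=1-2Q(t)$ by Glivenko--Cantelli, so for any fixed $\eta>0$, with probability tending to one the number of $|X_i|$ exceeding $t_N-\eta$ is at least $N(2Q(t_N-\eta)) > M$ and the number exceeding $t_N+\eta$ is at most $N(2Q(t_N+\eta)) < M$; hence $Y_{(M)} \in [t_N-\eta,\,t_N+\eta]$ whp. Step (b): Sandwich the sum of the top $M$ order statistics by the sums $\widetilde S^{\pm} = \sum_{i:|X_i|>t_N\mp\eta}|X_i|$. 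Each of these is a sum of iid contributions, and by the weak law of large numbers,
\begin{equation}
\tfrac{1}{N}\widetilde S^{\pm} \;\longrightarrow\; \mathbb{E}\bigl[|X_1|\,\mathbf{1}_{|X_1|>t_N\mp\eta}\bigr] \;=\; \sqrt{\tfrac{2}{\pi}}\,e^{-(t_N\mp\eta)^2/2}.
\end{equation}
Taking $\eta\to 0$ after $N\to\infty$ and using continuity of $t\mapsto e^{-t^2/2}$ yields $S_M/N \to \sqrt{2/\pi}\,e^{-t_N^2/2}$ in probability, and dividing by the first limit gives the claimed ratio.

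The main technical subtlety will be step (b): the top-$M$ sum differs from $\widetilde S^{\pm}$ by at most $|{\#\{i:|X_i|>t_N\mp\eta\}}-M|\cdot Y_{(M)}$, and one needs to check that this error is $o(N)$ whp. The count difference is $O(N\eta)$ by Glivenko--Cantelli, and $Y_{(M)}$ is bounded whp (since $t_N$ is bounded whenever $M/N$ stays away from $0$), so this error is $O(\eta N)$, giving the required control after letting $\eta\to 0$. The case where $M/N\to 0$ or $1$ requires only an obvious modification, since $t_N$ grows slowly enough that the same truncation argument applies; in the regime of interest for the rest of the paper $M/N$ stays bounded away from the endpoints, so I would state the lemma under that mild assumption.
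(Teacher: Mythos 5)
The paper states Lemma~\ref{lemma:Gaussian_Base} without any proof (it is invoked as a known concentration fact about Gaussian order statistics), so there is no in-paper argument to compare against; your proof is correct and is essentially the canonical one. The first claim is indeed just the weak law of large numbers with $\stexp|X_1|=\sqrt{2/\pi}$, and for the second claim your identification of the threshold $t_N=\Psi(M/(2N))$ via $\Prob(|X_1|>t_N)=2Q(t_N)=M/N$, together with $\stexp[|X_1|\mathbf{1}_{|X_1|>t_N}]=\sqrt{2/\pi}\,e^{-t_N^2/2}$, is exactly what makes the stated formula come out. The Glivenko--Cantelli localization of $Y_{(M)}$ and the two-sided sandwich by the deterministic-threshold sums $\widetilde S^{\pm}$ close the argument cleanly; note that the sandwich alone already suffices, since the two limits $\sqrt{2/\pi}\,e^{-(t_N\mp\eta)^2/2}$ differ by $O(\eta)$ when $t_N$ is bounded, so the separate error bookkeeping via the count discrepancy is not strictly needed. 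Two small points worth making explicit if this were written out in full: (i) since $t_N$ depends on $N$, the law of large numbers for $\widetilde S^{\pm}/N$ is formally a triangular-array statement, but a Chebyshev bound with the uniformly bounded variance of $|X_1|\mathbf{1}_{|X_1|>t}$ handles this immediately; and (ii) your restriction to $M/N$ bounded away from $0$ and $1$ is consistent with how the lemma is used in the paper, where $M/N$ is a fixed ratio $\tfrac{1-\epsilon_1}{1+\epsilon_0}$, so nothing is lost.
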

\noindent As a direct consequence of Lemma \ref{lemma:Gaussian_Base} we can write:
\beq
\Prob(|\frac{\|\x_{\overline{K_1}}\|_1}{\|\x\|_1} - (1-e^{-0.5\Psi^2(0.5\frac{1-\epsilon_1}{1+\epsilon_0})})|>\epsilon)\rightarrow0
\label{eq:kbar bound}
\eeq
\noindent for all $\epsilon>0$ as $n\rightarrow\infty$. Define
\beq
\nonumber \zeta(\epsilon_0):= \inf_{\epsilon_1>0}\frac{2C(1+\kappa^*)}{C-1}(1-e^{-0.5\Psi^2(0.5\frac{1-\epsilon_1}{1+\epsilon_0})})
\eeq

\noindent Incorporating (\ref{eq:robustness1}) into (\ref{eq:kbar bound}) we may write
\beq
\Prob(\frac{\|\x-\hat{\x}\|_1}{\|\x\|_1} - \zeta(\epsilon_0) < \epsilon)\rightarrow 1
\label{eq:robustness2}
\eeq
\noindent for all $\epsilon>0$ as $n\rightarrow\infty$. Let us
summarize our conclusions so far. First, we were able to show that
$|K\cap L|\geq k-W(\x,\|\x-\hat{\x}\|_1)$. Weak
robustness of $\ell_1$ minimization and Gaussianity of the signal then
led
us to the fact that for large $n$ with high probability $\|\x-\hat{\x}\|_1
\leq \zeta(\epsilon_0)\|\x\|_1$. These results build up the next key
theorem, which is the conclusion of this section.
\begin{thm}[Support Recovery]
Let $\A$ be an iid Gaussian $m\times n$ measurement matrix with
$\frac{m}{n}=\delta$. Let $k=(1+\epsilon_0)\mu_{W}(\delta)$ and $\x$
be a $n\times1$ random Gaussian $k$-sparse  signal. Suppose that
$\hat{\x}$ is the approximation to $\x$ given by the $\ell_1$ minimization, i.e. $\hat{\x}=argmin_{\A\z=\A\x}\|\z\|_1$. Then, as
$n\rightarrow\infty$, for all $\epsilon>0$,
\beq
\small
\Prob(\frac{|supp(\x) \cap supp_k(\hat{\x})|}{k} -
2Q(\sqrt{-2\log(1-\zeta(\epsilon_0))})>-\epsilon)\rightarrow 1.
\label{eq:support recovery}
\eeq
\label{thm:l_1 support recovery}
\end{thm}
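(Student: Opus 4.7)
\medskip

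\noindent\textbf{Proof plan.} The plan is to chain together the three ingredients assembled in the paragraph preceding the theorem: the combinatorial lemma relating $|K\cap L|$ to $W(\x,\|\x-\hat{\x}\|_1)$, the weak-robustness bound $\|\x-\hat{\x}\|_1\le \zeta(\epsilon_0)\|\x\|_1$ (which holds w.h.p.\ by \eqref{eq:robustness2}), and the Gaussian order-statistic concentration of Lemma~\ref{lemma:Gaussian_Base} applied to the $k$ nonzero entries of $\x$. By Lemma~\ref{lem:deviation thm},
\[
\frac{|K\cap L|}{k}\;\ge\;1-\frac{W(\x,\|\x-\hat{\x}\|_1)}{k},
\]
so it suffices to show that with probability tending to $1$, $W(\x,\zeta(\epsilon_0)\|\x\|_1)/k$ is below $1-2Q(\sqrt{-2\log(1-\zeta(\epsilon_0))})+\epsilon$. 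I would first invoke \eqref{eq:robustness2} to replace $\|\x-\hat{\x}\|_1$ by $\zeta(\epsilon_0)\|\x\|_1$ (using monotonicity of $W(\x,\cdot)$ in its second argument), absorbing the $\epsilon$ slack in the probability statement.

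Next I would identify $W(\x,\lambda)$ explicitly for a Gaussian sparse vector: since the maximizing subset is obtained by picking the nonzero entries of smallest magnitude, $W(\x,\lambda)$ is the largest $M$ such that the sum of the $M$ smallest-in-magnitude entries of $\x$ is $\le\lambda$. Writing this sum as $\|\x\|_1-S_{k-M}$, where $S_{k-M}$ is the sum of the top $k-M$ magnitudes among the $k$ nonzero iid $|\mathcal{N}(0,1)|$ entries, Lemma~\ref{lemma:Gaussian_Base} gives the concentration
\[
\frac{S_{k-M}}{\|\x\|_1}\;\longrightarrow\;\exp\!\Bigl(-\tfrac{1}{2}\Psi^2\!\bigl(\tfrac{k-M}{2k}\bigr)\Bigr)
\]
in probability as $k\to\infty$. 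The condition $\|\x\|_1-S_{k-M}\le\zeta(\epsilon_0)\|\x\|_1$ becomes, asymptotically, $\Psi(\tfrac{k-M}{2k})\le\sqrt{-2\log(1-\zeta(\epsilon_0))}$, and inverting through $Q=\Psi^{-1}$ yields
\[
\frac{M}{k}\;\le\;1-2Q\!\bigl(\sqrt{-2\log(1-\zeta(\epsilon_0))}\bigr)+o(1).
\]
Substituting this into the bound from Lemma~\ref{lem:deviation thm} gives exactly the claimed inequality \eqref{eq:support recovery}.

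\medskip

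\noindent\textbf{Main obstacle.} The combinatorial and algebraic manipulations above are essentially bookkeeping; the delicate step is the simultaneous handling of the several $\varepsilon$-slacks. In particular, the bound \eqref{eq:robustness2} depends on the auxiliary parameter $\epsilon_1$ (through $\zeta(\epsilon_0)=\inf_{\epsilon_1>0}\cdots$), while the concentration in Lemma~\ref{lemma:Gaussian_Base} is stated for fixed $\epsilon$ as $n\to\infty$; I need to fix a near-optimal $\epsilon_1$ first, then apply the Gaussian concentration to both $\|\x_{\overline{K_1}}\|_1/\|\x\|_1$ and to the partial sums defining $W$, and finally take a union bound over these $O(1)$-many events so that all of them hold with probability $\to 1$. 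A minor additional point is ensuring that the map $\lambda\mapsto W(\x,\lambda)/k$ is continuous enough at $\lambda=\zeta(\epsilon_0)\|\x\|_1$ that perturbing $\lambda$ by an $\varepsilon\|\x\|_1$ perturbs $M/k$ by $o(1)$; this follows because $\Psi$ and $Q$ are continuous and strictly monotone on the relevant range, so that inverting the asymptotic identity introduces no discontinuity.
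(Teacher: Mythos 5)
Your proposal is correct and follows essentially the same route as the paper's proof: combine Lemma~\ref{lem:deviation thm}, the weak-robustness bound (\ref{eq:robustness2}), and the Gaussian order-statistic concentration of Lemma~\ref{lemma:Gaussian_Base}, using monotonicity of $W(\x,\cdot)$ and continuity of $Q$ and $\log$ to absorb the slacks. Your explicit computation showing $W(\x,\alpha\|\x\|_1)/k \le 1-2Q\bigl(\sqrt{-2\log(1-\alpha)}\bigr)+o(1)$ via the identity ``sum of the $M$ smallest magnitudes $=\|\x\|_1-S_{k-M}$'' correctly fills in the step the paper merely asserts as an implication of Lemma~\ref{lemma:Gaussian_Base}.
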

\begin{proof}
For each $\epsilon'>0$ and large enough $n$, with high probability it
holds that
$\|\x-\hat{\x}\|_1<(\zeta(\epsilon_0)+\epsilon')\|\x\|_1$. Therefore,
from Lemma \ref{lem:deviation thm} and the fact that $W(\x,\lambda)$
is increasing in $\lambda$, $|K\cap L| \geq k -
W(\x,(\zeta(\epsilon_0)+\epsilon')\|\x\|_1)$ with high
probability. Also, an implication of Lemma \ref{lemma:Gaussian_Base}
reveals that for any positive $\epsilon''$ and $\alpha$, $\frac{W(\x,\alpha\|x\|_1)}{k}<(1-2Q(\sqrt{-2\log(1-\alpha)}))+\epsilon''$
for large enough $n$. Putting these together, we conclude that with very high
probability $\frac{|K\cap L|}{k}\geq
2Q(\sqrt{-2\log(1-\zeta(\epsilon_0)-\epsilon')})-\epsilon''$. The
desired result now follows from the continuity of the $\log(\cdot)$
and $Q(\cdot)$ functions.
\end{proof}
\noindent Note that if $\lim_{\epsilon_0\rightarrow0}\zeta(\epsilon_0)=0$, then  Theorem \ref{thm:l_1 support recovery} implies that $\frac{|K\cap L|}{k}$ becomes arbitrarily close to 1.

 \section{Perfect Recovery, Step 3 of the Algorithm}
 \label{sec:perfect recovery}
 In Section \ref{sec:robustness} we showed that. if
 $\epsilon_0$ is small, the $k$-support of $\hat{\x}$, namely
 $L=supp_k(\hat{\x})$, has a significant overlap with the true support of
 $\x$. We even found a quantitative lower bound on the size of this overlap
 in Theorem \ref{thm:l_1 support recovery}. In step 3 of
 Algorithm \ref{alg:modmain}, weighted $\ell_1$ minimization is used,
 where the entries in  $\overline{L}$ are assigned a higher weight
 than those in $L$. In \cite{isitweighted}, we have been able to
 analyze the performance of such weighted $\ell_1$ minimization
 algorithms. The idea is that if a sparse vector $\x$ can
 be partitioned into two sets $L$ and $\overline{L}$, where in one set
 the fraction of non-zeros is much larger than in the other set, then
 (\ref{eq:weighted l_1}) can potentially recover $\x$ with an
 appropriate choice of the weight $\omega > 1$, even though
 $\ell_1$ minimization cannot.  The following theorem can be deduced
 from \cite{isitweighted}.
\begin{thm}
Let $L\subset \{1,2,\cdots,n\}$ , $\omega>1$ and the fractions
$f_1,f_2\in[0,1]$ be given. Let $\gamma_1 = \frac{|L|}{n}$ and
$\gamma_2=1-\gamma_1$. There exists a threshold
$\delta_c(\gamma_1,\gamma_2,f_1,f_2,\omega)$ such that with high
probability, almost all random sparse vectors $\x$ with \emph{at least}
$f_1\gamma_1n$ nonzero entries over the set $L$, and \emph{at most}
$f_2\gamma_2n$ nonzero entries over the set $\overline{L}$ can be
perfectly recovered using
$\min_{\A\z=\A\x}\|\z_L\|_1+\omega\|\z_{\overline{L}}\|_1$, where $\A$
is a $\delta_cn\times n$ matrix with iid Gaussian entries.

\noindent Furthermore, for appropriate $\omega$,
\[
\mu_W(\delta_c(\gamma_1,\gamma_2,f_1,f_2,\omega))<f_1\gamma_1+f_2\gamma_2,
\]
i.e., standard $\ell_1$ minimization using a $\delta_cn\times n$
measurement matrix with iid Gaussian entries cannot recover such $x$.
\label{thm:delta}
\end{thm}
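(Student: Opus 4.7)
The plan is to establish Theorem \ref{thm:delta} by reducing it to the weighted $\ell_1$ phase-transition analysis we carried out in \cite{isitweighted}, and then to isolate the regime in which the reweighting strictly improves upon unweighted $\ell_1$. First, I would write down the null-space characterization for the weighted program: a vector $\x$ with support $K$ and sign pattern $\sigma$ is the unique minimizer of $\min_{\A\z=\A\x}\|\z_L\|_1+\omega\|\z_{\overline{L}}\|_1$ if and only if, for every nonzero $\w$ in the null space of $\A$,
\[
-\sum_{i\in K} w_i\, \sigma_i\, \w_i \;<\; \sum_{i\in \overline{K}} w_i\, |\w_i|,
\]
where $w_i=1$ for $i\in L$ and $w_i=\omega$ for $i\in \overline{L}$. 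Since $\A$ has iid Gaussian entries, its null space is a uniformly distributed $(n-m)$-dimensional subspace, so the probability that this condition holds for a randomly chosen admissible $(K,\sigma)$ admits a Grassmann-angle expression that parallels Donoho and Tanner's treatment of plain $\ell_1$.

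Next, I would invoke the asymptotic angle computations from \cite{isitweighted}. There the internal and external Grassmann angles of the weighted cross-polytope $\{\z:\|\z_L\|_1+\omega\|\z_{\overline{L}}\|_1\le 1\}$ are evaluated at faces corresponding to supports that place $f_1\gamma_1 n$ nonzeros on $L$ and $f_2\gamma_2 n$ on $\overline{L}$. A first-moment argument then shows that the failure probability is $o(1)$ exactly when $\delta$ exceeds a critical value $\delta_c(\gamma_1,\gamma_2,f_1,f_2,\omega)$ that makes the associated exponent negative. This yields the first part of the theorem for almost all support/sign configurations meeting the $(f_1,f_2)$ constraints.

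For the strict-improvement claim, the starting observation is that at $\omega=1$ the weighted program degenerates into ordinary $\ell_1$, so $\delta_c(\gamma_1,\gamma_2,f_1,f_2,1)$ is precisely the classical weak threshold at total sparsity $f_1\gamma_1+f_2\gamma_2$, i.e.\ $\mu_W(\delta_c)=f_1\gamma_1+f_2\gamma_2$. I would then argue by a perturbation analysis of the angle-exponent: the partial derivative of the critical $\delta_c$ with respect to $\omega$ at $\omega=1$ is strictly negative whenever $f_1>f_2$, because slightly enlarging $\omega$ inflates the solid angles subtended by those faces of the weighted cross-polytope whose nonzero pattern is concentrated in $L$, reducing the measurement rate needed to separate them. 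Picking any $\omega>1$ in the interval on which this perturbation remains valid therefore gives $\mu_W(\delta_c(\gamma_1,\gamma_2,f_1,f_2,\omega))<f_1\gamma_1+f_2\gamma_2$, as required.

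The main obstacle is the monotonicity step: showing that the Grassmann angle exponent has strictly negative sensitivity in $\omega$ at $\omega=1$ when $f_1>f_2$ requires a delicate manipulation of Gaussian integrals whose parameters depend nonlinearly on $\omega$. I will not reproduce that calculation here, since it is the principal technical content of \cite{isitweighted}; rather, I treat that result as a black box and extract Theorem \ref{thm:delta} from it by specializing to the partition $(L,\overline{L})$ and sparsity fractions $(f_1,f_2)$ produced by Steps 1--2 of Algorithm \ref{alg:modmain}.
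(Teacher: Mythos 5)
Your outline matches what the paper actually does for this theorem: the paper gives no self-contained proof, deferring entirely to \cite{isitweighted} and reproducing in Appendix~\ref{App:delta_c} exactly the combinatorial, internal, and external angle exponents ($\psi_{com},\psi_{int},\psi_{ext}$) that arise from the Grassmann-angle/first-moment computation you describe, so your reconstruction via the weighted null-space condition and the angle exponents of the weighted cross-polytope is the same route. Your perturbation argument at $\omega=1$ for the strict-improvement clause (correctly noting it needs $f_1>f_2$, a hypothesis the theorem statement leaves implicit) is a reasonable rendering of the part the paper likewise treats as a black box from \cite{isitweighted}.
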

\noindent For completeness, in Appendix \ref{App:delta_c}, we provide
the calculation of $\delta_c(\gamma_1,\gamma_2,f_1,f_2,\omega)$. A
software package for computing such thresholds can also be found in
\cite{sotware link}.

\begin{thm}[Perfect Recovery]
Let $\A$ be a $m\times n$ i.i.d. Gaussian matrix with
$\frac{m}{n}=\delta$. If
$\lim_{\epsilon_0\rightarrow0}\zeta(\epsilon_0)=0$ and
$\delta_c(\mu_{W}(\delta),1-\mu_{W}(\delta),1,0,\omega) < \delta$,
then there exist $\epsilon_0>0$ and $\omega>0$ so that Algorithm
\ref{alg:modmain} perfectly recovers a random
$(1+\epsilon_0)\mu_{W}(\delta)$-sparse vector with i.i.d. Gaussian
entries with high probability as $n$ grows to infinity.
\label{thm: final thm}
\end{thm}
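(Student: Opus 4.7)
The plan is to chain Theorem \ref{thm:l_1 support recovery} into Theorem \ref{thm:delta}. The first supplies a quantitative lower bound on the overlap between the true support $K$ of $\x$ and the approximate support $L=supp_k(\hat{\x})$ produced by steps 1--2; the second then says that such a nearly correct guess of the support is enough for the weighted $\ell_1$ step to recover $\x$ exactly, provided the undersampling ratio $\delta$ exceeds the weighted threshold $\delta_c$ at the actual parameters $(\gamma_1,\gamma_2,f_1,f_2,\omega)$ realized by the algorithm.

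First I fix the weight $\omega>1$ supplied by the hypothesis, so that $\delta_c(\mu_W(\delta),1-\mu_W(\delta),1,0,\omega)<\delta$ \emph{strictly}. The function $\delta_c$ is built from the smooth Grassmann-angle expressions recalled in the appendix and depends continuously on its four size/fraction arguments; consequently there is an open neighborhood $\mathcal{N}$ of $(\mu_W(\delta),1-\mu_W(\delta),1,0)$ on which $\delta_c(\cdot,\omega)<\delta$ continues to hold. I pick a single tolerance $\eta>0$ small enough that any tuple $(\gamma_1,\gamma_2,f_1,f_2)$ with $|\gamma_1-\mu_W(\delta)|<\eta$, $|\gamma_2-(1-\mu_W(\delta))|<\eta$, $f_1>1-\eta$, and $f_2<\eta$ lies inside $\mathcal{N}$. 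I then choose $\epsilon_0>0$ small enough so that two things hold: (i) $\gamma_1=(1+\epsilon_0)\mu_W(\delta)$ is within $\eta$ of $\mu_W(\delta)$; and (ii) the lower bound $2Q(\sqrt{-2\log(1-\zeta(\epsilon_0))})$ furnished by Theorem \ref{thm:l_1 support recovery} exceeds $1-\eta/2$. Both are feasible because the hypothesis $\zeta(\epsilon_0)\to 0$ forces the argument of $Q(\cdot)$ to $0$ and the whole bound to $1$.

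With $\omega$ and $\epsilon_0$ fixed, Theorem \ref{thm:l_1 support recovery} gives with probability $1-o(1)$ that $|K\cap L|/k\geq 1-\eta/2$, whence the realized fractions $f_1=|K\cap L|/|L|$ and $f_2=(k-|K\cap L|)/(n-k)$ satisfy $f_1>1-\eta$ and $f_2<\eta$ for large $n$. All four parameters therefore fall inside $\mathcal{N}$, and Theorem \ref{thm:delta} applied to the weighted $\ell_1$ program of step 3 yields $\x^*=\x$ with high probability; a union bound over the high-probability events of steps 1 and 3 finishes the argument. The one genuine subtlety I expect is that $L$ is a function of both $\A$ and $\x$, so Theorem \ref{thm:delta} cannot be invoked naively with $L$ treated as deterministic. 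The way around this is that Theorem \ref{thm:delta} is really a consequence of a weighted null-space property of $\A$ that, for iid Gaussian $\A$, holds with high probability uniformly over all admissible subsets $L$ and sign patterns, so the correlation between $L$ and $\A$ causes no trouble; verifying this uniform null-space statement, and the continuity of $\delta_c$ in all four arguments (rather than just asserting it), is where the main technical care is needed.
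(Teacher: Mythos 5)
Your proposal follows essentially the same route as the paper's own proof: invoke Theorem \ref{thm:l_1 support recovery} to drive $f_1\to 1$ and $f_2\to 0$ as $\epsilon_0\to 0$, then use the strict inequality $\delta_c(\mu_W(\delta),1-\mu_W(\delta),1,0,\omega)<\delta$ together with continuity of $\delta_c$ to select a strictly positive $\epsilon_0$ for which Theorem \ref{thm:delta} certifies exact recovery in step 3. You are in fact somewhat more careful than the paper on one point---the paper invokes Theorem \ref{thm:delta} as if $L$ were a fixed set, whereas you explicitly flag that $L$ depends on $\A$ and $\x$ and sketch the uniformity-over-$L$ argument needed to close that gap.
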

\begin{proof}
Leveraging on the statement of Theorem \ref{thm:delta}, in order to
show that $\x$ is perfectly recovered in the last step of the
algorithm, , it is sufficient to find the overlap fractions
$f_1=\frac{|L\cap K|}{|L|}$ and $f_2=\frac{|\overline{L}\cap
  K|}{|\overline{L}|}$ for a given $\epsilon_0$, and show that
$\delta_c(\frac{k}{n},1-\frac{k}{n},f_1,f_2,\omega) \leq \delta$.  On
the other hand, according to Theorem \ref{thm:l_1 support recovery}
as $\epsilon_0\rightarrow0$, $f_1\rightarrow1$ and
$f_2\rightarrow0$. Therefore, if
$\delta_c(\mu_{W}(\delta),1-\mu_{W}(\delta),1,0,\omega) < \delta$,
from the continuity of $\delta_c$ we can conclude that for a strictly
positive $\epsilon_0$ and corresponding overlap fractions $f_1$ and
$f_2$,
$\delta_c((1+\epsilon_0)\mu_{W}(\delta),1-(1+\epsilon_0)\mu_{W}
(\delta),f_1,f_2,\omega)< \delta$, which completes the proof.
\end{proof}
For $\delta= 0.555$ it is  easy to verify numerically that the
conditions of Theorem \ref{thm: final thm} hold. We haven  chosen
$\alpha=1$ and  have computed an approximate upper bound
$\zeta^*(\epsilon_0)$ for $\zeta(\epsilon_0)$, using the results of \cite{isitrobust}. This is
depicted in Figure \ref{fig:zeta(e0)}. As shown, when
$\epsilon_0\rightarrow0$, $\zeta^*(\epsilon_0)$ becomes arbitrarily
small too. Using this curve and the numerical $\delta_c$ function from
Appendix \ref{App:delta_c}, we can show that for $\omega=10$, the
value of $\epsilon_0=5\times10^{-4}$ satisfies the statement of
Theorem \ref{thm: final thm}.  This improvement is of course much
smaller than what we observe in practice.
\begin{figure}[t]
\centering
  \includegraphics[width= 0.4\textwidth]{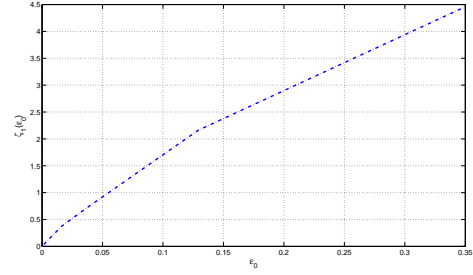}
  \caption{ \scriptsize{ An approximate upper bound for $\zeta(\epsilon_0)$ for $\delta=0.555$ .}}
  \label{fig:zeta(e0)}
\end{figure}

\section{Beyond Gaussians and Simulations}
\label{sec:simulation}

It is reasonable to ask if we can prove a theoretical threshold
improvement for sparse signals with other distributions. The attentive
reader will note that the only step where we used the Gaussianity of
the signal was in the the order statistics results of Lemma
\ref{lemma:Gaussian_Base}. This result has the following
interpretation. For $N$ iid random variables, the
ratio $\frac{S_M}{S_N}$ can be
approximated by a known function of $\frac{M}{N}$. In the Gaussian
case, this function behaves as $(1-\frac{M}{N})^2$, as $M\rightarrow
N$. For constant magnitude signals (say BPSK), the function behaves as
$1-\frac{M}{N}$, for $M\rightarrow N$, which proves that the
reweighted method yields no improvement. A more careful analysis,
beyond the scope and space of this paper, reveals that the improvement
over $\ell_1$ minimization depends on the behavior of
$\frac{S_M}{S_N}$, as $M\rightarrow N$, which in term depends on the
smallest order $n$ for which $f^{(n)}(0)\neq 0$, i.e., the smallest
$n$ such that the $n$-th
derivative of the distribution at the origin is nonzero.

\begin{figure}[t]
\centering
  \includegraphics[width= 0.5\textwidth]{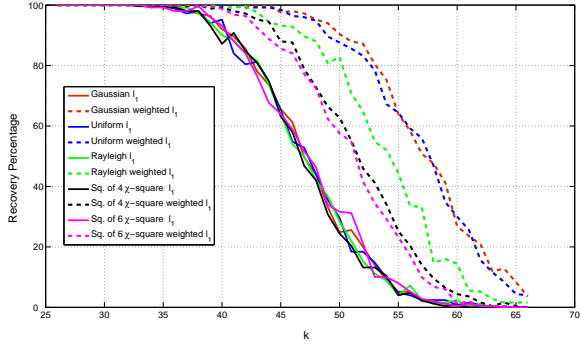}
  \caption{ \scriptsize{Empirical Recovery Percentage for $n=200$ and $\delta = 0.5555$.}}
  \label{fig:simultions}
\end{figure}

These are exemplified by the simulations in Figure
\ref{fig:simultions}. Here the signal
dimension is $n=200$, and the number of measurements is $m=112$, which
corresponds to a value of $\delta = 0.5555$. We generated random
sparse signals with iid entries coming from certain
distributions; Gaussian, uniform,  Rayleigh , square root of
$\chi$-square with 4 degrees of freedom and, square root of
$\chi$-square with 6 degrees of freedom. Solid lines represent the simulation
results for ordinary $\ell_1$ minimization, and different colors
indicate different distributions. Dashed lines are used to show the
results for Algorithm
\ref{alg:modmain}. Note that the more derivatives that vanish at the
origin, the less the improvement over $\ell_1$ minmimization. The
Gaussian and uniform distributions are flat
and nonzero at the origin and show an impressive more than 20\%
improvement in the weak threshold (from 45 to 55).



\begin{thebibliography}{1}


\bibitem{D CS}  D. Donoho,`` Compressed sensing'', {\it IEEE Trans. on Information Theory}, 52(4), pp. 1289 - 1306, April 2006)

\bibitem{DT} D. Donoho,
``High-Dimensional Centrally Symmetric Polytopes with Neighborliness
Proportional to Dimension '', {\it Discrete and Computational
Geometry }, 102(27), pp. 617-652, 2006, Springer .


\bibitem{CT} E. Cand\'{e}s and T. Tao,
``Decoding by linear programming'', {\it IEEE Trans. on Information
Theory}, 51(12), pp. 4203 - 4215, December 2005.


\bibitem{D} D. Donoho and J. Tanner, ``Thresholds for the Recovery of Sparse Solutions via L1 Minimization",\emph{ Proceedings of the Conference on Information Sciences and Systems}, March 2006.


\bibitem{Baraniuk Manifolds} R. G. Baraniuk and M. B. Wakin
``Random Projections of Smooth Manifolds '', {\it Journal of Foundations of Computational Mathematics},     Volume 9, No.1, Feb. 2009,

%
%
\bibitem{Donoho positive}
D. Donoho and J. Tanner, ``Sparse nonnegative solutions of underdetermined linear equations by linear programming'' Proc.  National Academy of Sciences, 102(27), pp.9446-9451, 2005.
%
%
%
%
%
%


\bibitem{Needell} D. Needell, ``Noisy signal recovery via iterative reweighted L1-minimization"
     Proc. Asilomar Conf. on Signals, Systems, and Computers, Pacific Grove, CA Nov. 2009.


\bibitem{isitweighted}   A. Khajehnejad, W. Xu, A. Avestimehr, Babak Hassibi, ``Weighted $\ell_1$ minimization for Sparse Recovery with Prior Information",
accepted to \emph{the International Symposium on Information Theory
2009}, available online http://arxiv.org/abs/0901.2912. Complete jounrnal manuscript to be submitted.

\bibitem{isitrobust} W. Xu and B.  Hassibi, ``On Sharp Performance Bounds for Robust Sparse
Signal Recoveries", accepted to \emph{the International Symposium on
Information Theory 2009}.


\bibitem{CWB07} E. J. Cand\'{e}s, M. B. Wakin, and S. Boyd, ``Enhancing Sparsity by Reweighted l1
Minimization", {\em Journal of Fourier Analysis and Applications},
14(5), pp.~877-905, special issue on sparsity, December 2008.

\bibitem{Allerton reweighted l_1}
B. Hassibi, A. Khajehnejad, W. Xu, S. Avestimehr, ''Breaking the $L_1$ Recovery
Thresholds with Reweighted $L_1$ Optimization,", in proceedings of Allerton 2009.

\bibitem{Rice} Compressive sesing online resources at Rice university, http://www.dsp.ece.rice.edu/cs

\bibitem{sotware link}
http://www.its.caltech.edu/$\sim$amin/weighted\_l1\_codes/

\end{thebibliography}
%

\appendix{}
\section{Computation of $\delta_c$ Threshold}
\label{App:delta_c}
\begin{center}A. Computation of $\delta_c$ Threshold\end{center}
The following formulas for $\delta_c(\gamma_1,\gamma_2,f_1,f2_,\omega)$ are given in \cite{isitweighted}.
\begin{align}
\nonumber \delta_c = &\min\{\delta~|~\psi_{com}(\tau_1,\tau_2)-\psi_{int}(\tau_1,\tau_2)-\psi_{ext}(\tau_1,\tau_2)<0~ \\
\nonumber &\forall ~0\leq \tau_1\leq \gamma_1(1-f_1), 0\leq \tau_2\leq \gamma_2(1-f_2),\\
\nonumber &\tau_1+\tau_2 > \delta-\gamma_1f_1-\gamma_2f_2 \}
\end{align}
\noindent where $\psi_{com}$, $\psi_{int}$ and $\psi_{ext}$ are obtained as follows. Define $g(x)=\frac{2}{\sqrt{\pi}}e^{-{x^2}}$, $G(x)=\frac{2}{\sqrt{\pi}}\int_{0}^{x}e^{-y^2}dy$ and let $\varphi(.)$ and $\Phi(.)$ be the standard Gaussian pdf and cdf functions respectively.
\begin{align}
\nonumber &\psi_{com}(\tau_1,\tau_2) = (\tau_1+\tau_2 + \gamma_1(1-f_1)H(\frac{\tau_1}{\gamma_1(1-f_1)})\\
&+\gamma_2(1-f_2)H(\frac{\tau_2}{\gamma_2(1-f_2)}) + \gamma_1H(f_1) + \gamma_2H(f_2)) \log{2}
\end{align}
\noindent where $H(\cdot)$ is the Shannon entropy function. Define $c=(\tau_1+\gamma_1f_1)+\omega ^2(\tau_2+\gamma_2f_2)$, $\alpha_1=\gamma_1(1-f_1)-\tau_1$ and $\alpha_2=\gamma_2(1-f_2)-\tau_2$. Let $x_0$ be the unique solution to $x$ of the equation
$2c-\frac{g(x)\alpha_1}{xG(x)}-\frac{\omega g(\omega x)\alpha_2}{xG(\omega x)}=0$. Then
\begin{equation}
\psi_{ext}(\tau_1,\tau_2) = cx_0^2-\alpha_1\log{G(x_0)}-\alpha_2\log{G(\omega x_0)}
\end{equation}

\noindent Let $b=\frac{\tau_1+\omega ^2\tau_2}{\tau_1+\tau_2}$, $\Omega'=\gamma_1f_1+\omega ^2\gamma_2f_2$ and $Q(s)=\frac{\tau_1\varphi(s)}{(\tau_1+\tau_2)\Phi(s)}+\frac{\omega \tau_2\varphi(\omega s)}{(\tau_1+\tau_2)\Phi(\omega s)}$. Define the function $\hat{M}(s)=-\frac{s}{Q(s)}$ and solve for $s$ in $\hat{M}(s)=\frac{\tau_1+\tau_2}{(\tau_1+\tau_2)b+\Omega'}$. Let the unique solution be $s^*$ and set $y=s^*(b-\frac{1}{\hat{M}(s^*)})$. Compute the rate function $\Lambda^*(y)= sy -\frac{\tau_1}{\tau_1+\tau_2}\Lambda_1(s)-\frac{\tau_2}{\tau_1+\tau_2}\Lambda_1(\omega s)$ at the point $s=s^*$, where $\Lambda_1(s) = \frac{s^2}{2} +\log(2\Phi(s))$.
The internal angle exponent is then given by:
\begin{equation}
\psi_{int}(\tau_1,\tau_2) = (\Lambda^*(y)+\frac{\tau_1+\tau_2}{2\Omega'}y^2+\log2)(\tau_1+\tau_2)
\end{equation}

\end{document}